\documentclass[12pt]{amsart}
\linespread{1.5}
\usepackage{amsmath,amsfonts,amsbsy,amsgen,amscd,mathrsfs,amssymb,amsthm}
\usepackage{enumerate,mathtools}

\usepackage{clipboard}
\newclipboard{myclipboard}

\usepackage{fullpage}
\usepackage{url}

\usepackage{mathrsfs}

\usepackage[colorlinks=true,linkcolor=blue]{hyperref} 
\makeatletter
\renewcommand*{\eqref}[1]{%
  \hyperref[{#1}]{\textup{\tagform@{\ref*{#1}}}}%
}
\makeatother

\numberwithin{equation}{section}

\newtheorem{theorem}{Theorem}[section]
\newtheorem*{theorem*}{Main theorem}
\newtheorem*{corollary*}{Corollary}

\newtheorem{lemma}[theorem]{Lemma}
\newtheorem{example}[theorem]{Example}
\theoremstyle{definition}
\newtheorem{definition}[theorem]{Definition}

\newtheorem{remark}[theorem]{Remark}

\newtheorem*{claim*}{Claim}

\newcommand{\R}{\mathbb R}
\newcommand{\PP}{\mathbf P}

\newcommand{\EE}{\mathbf E}


\newcommand{\Ent}{\operatorname{Ent}}
\newcommand{\Var}{\operatorname{Var}}
\newcommand{\Id}{\operatorname{Id}}

\newcommand{\T}{\mathsf{T}}

\title{Exact renormalization groups and transportation of measures}

\author{Yair Shenfeld}
\address{Department of Mathematics, Massachusetts Institute of Technology, 
Cambridge, MA, USA}
\email{shenfeld@mit.edu}

\begin{document}
\maketitle

\begin{abstract}
This note provides a new perspective on Polchinski's exact renormalization group, by explaining how it gives rise, via the multiscale Bakry-\'Emery criterion, to Lipschitz transport maps between Gaussian free fields and interacting quantum and statistical field theories. Consequently, many functional inequalities can be verified for the latter field theories, going beyond the current known results.
\end{abstract}

\section{Introduction}
\label{sec:intro}
\subsection{Summary} One of the fundamental tools in the study of quantum and statistical field theories is the renormalization group in its various formulations. We are concerned with an exact version of the renormalization group, due to Polchinski,  where a continuum of scales is used in the renormalization process \cite{polchinski1984renormalization,brydges1987mayer}. Recently, Bauerschmidt \& Bodineau \cite{bauerschmidt2020spectral, BB} and  Bauerschmidt \& Dagallier  \cite{bauerschmidt2022log, bauerschmidt2022logPhi} have shown that the renormalization (semi)group of Polchinski is a valuable tool in proving functional inequalities (Poincar\'e and log-Sobolev) for a number of Euclidean quantum field theories and statistical field theories---the interest in these inequalities stems from the fact that they imply fast relaxation to equilibrium of the dynamics of the respective field theories.  In particular, the application of Polchinski's equation to the study of functional inequalities is  facilitated by the so-called multiscale Bakry-\'Emery criteria. In this work, we  provide a new perspective on the subject by showing how one of the versions of the  multiscale Bakry-\'Emery criteria gives rise to Lipschitz transport maps between Gaussian free fields and  interacting Euclidean quantum, or statistical, (scalar) field theories. The Lipschitz properties of these transport maps imply that many functional inequalities, which are known to hold for Gaussian free fields, also hold (with constants depending on the Lipschitz constants of the transport maps) for any field theory where the multiscale Bakry-\'Emery criterion can be verified. For example, for the two-dimensional massive continuum sine-Gordon model, we recover a log-Sobolev inequality (see \cite{BB}) and prove many other functional inequalities which, until now, were not known (e.g., Theorem \ref{thm:ppoincare}, Theorem \ref{thm:iso}, and Theorem \ref{thm:eigenvalues}). Indeed, the advantage of the transportation of measure approach is that, once a Lipschitz transport map between a free field to an interacting field is constructed, the transfer of functional inequalities between the fields becomes (almost) automatic, and bypasses the need to provide a new proof for each functional inequality of interest. 

\subsection*{Organization} Section \ref{sec:intro} introduces the objects of study in this note and describes the main results. Section \ref{sec:LN} contains the construction and Lipschitz properties of the Langevin/Ornstein-Uhlenbeck transport map, on which our results are based. Section \ref{sec:BM} briefly sketches the connection  between exact renormalization and a different transport map, the Brownian transport map, as well as connections to related works in the literature.

\subsection{Models}
Euclidean quantum, or statistical, field theories can be modeled as formal probability measures on function spaces. Let us present the regularizations of these models. Let $d$ be the dimension and let $L\mathbb T^d$ be the torus in $\R^d$ of side length $L>0$. We let $\Lambda_{\epsilon,L}:=L\mathbb T^d\cap\epsilon \mathbb Z^d$ with $L$ being a multiple of $\epsilon$; here $L$ is the infrared cutoff while $\epsilon$ is the ultraviolet cutoff. Our models will be defined as probability measures  $\nu^{\epsilon,L}$ on $\R^{\Lambda_{\epsilon,L}}$. One of the challenges in constructing such models is to show that the infrared and ultraviolet limits of $\nu^{\epsilon,L}$, that is, $L\uparrow \infty$ and $\epsilon\downarrow 0$, are well-defined. In this work, we will not deal with this issue and, instead, strive for estimates that are independent of $L$ and $\epsilon$. Let us now specify the types of models we focus on. Given a mass $m> 0$ we let $\gamma^{\epsilon,L}$ be the Gaussian free field with covariance $A_{\epsilon}^{-1}:=\epsilon^{-d}(-\Delta^{\epsilon}+m)^{-1}$, that is,
\[
\gamma^{\epsilon,L}(d\varphi)\propto \exp\left[-\frac{\epsilon^d}{2}\sum_{x\in \Lambda_{\epsilon,L}}\left(\varphi_x(-\Delta^{\epsilon}+m)\varphi_x\right)\right]d\varphi=\exp\left[-\frac{1}{2}(\varphi,A_{\epsilon}\varphi)\right]d\varphi,
\]
where $d\varphi$ is the Lebesgue measure on $\R^{\Lambda_{\epsilon,L}}$, $\Delta^{\epsilon}$ is the discrete Laplacian, i.e., $(\Delta^{\epsilon}\varphi)_x=\epsilon^{-2}\sum_{y\sim x}(\varphi_y-\varphi_x)$, and $( \cdot,\cdot)$ is the standard inner product with $|\cdot|:=\sqrt{( \cdot,\cdot)}$. The models $\nu^{\epsilon,L}$ take the form
\[
\nu^{\epsilon,L}(d\varphi)= e^{-V_0^{\epsilon,L}(\varphi)}\gamma^{\epsilon,L}(d\varphi),
\]
for $V_0^{\epsilon,L}:\R^{\Lambda_{\epsilon,L}}\to \R$ (the meaning of the subscript on $V_0^{\epsilon,L}$ will become clear). For example, the \textbf{$2$-dimensional massive sine-Gordon model} is of the form,
\[
V_0^{\epsilon,L}(\varphi)\propto -\sum_{x\in \Lambda_{\epsilon,L}}2z\epsilon^{2-\beta/4\pi}\cos (\sqrt{\beta}\varphi_x),
\]
where $z$ is the coupling constant and $\beta$ is the inverse temperature \cite{BB}.

\subsection{Transportation of measure and functional inequalities} 
\label{subsec:transportfunct}

The idea at the base of our work is that the existence of a Lipschitz transport map between some nice measure $\mu^{\epsilon,L}$ and $\nu^{\epsilon,L}$ provides a systematic way to transfer functional inequalities from $\mu^{\epsilon,L}$ to $\nu^{\epsilon,L}$ \cite{cordero2002some}. Let us demonstrate, formally, this idea with the Poincar\'e inequality, but emphasize that the strength of the transport method is that it applies to many other functional inequalities. Suppose that $\mu^{\epsilon,L}$ is a measure on some measurable space $(\Omega^{\epsilon,L}, \mathcal F^{\epsilon,L})$ which satisfies the Poincar\'e inequality with constant $a^{\epsilon,L}$:  For any nice-enough function $F:\Omega^{\epsilon,L}\to \R$, 
\[
\Var_{\mu^{\epsilon,L}}[F]\le a^{\epsilon,L} \int_{\Omega^{\epsilon,L}} |DF|^2d\mu^{\epsilon,L},
\]
where $DF$ is an appropriate notion of derivative and $|\cdot|$ is an appropriate norm. Let $T^{{\epsilon,L}}:\Omega^{\epsilon,L}\to \R^{\Lambda_{\epsilon,L}}$ be a map which pushes forward $\mu^{\epsilon,L}$ to $\nu^{\epsilon,L}$, and is $c^{\epsilon,L}$-Lipschitz in the sense that $|DT^{{\epsilon,L}}|\le c^{\epsilon,L}$. Then, by the chain rule, for any nice-enough function $F:\R^{\Lambda_{\epsilon,L}}\to \R$,
\begin{align*}
\Var_{\nu^{\epsilon,L}}[F]&=\Var_{\mu^{\epsilon,L}}[F\circ T^{{\epsilon,L}}]\le a^{\epsilon,L}\int_{\Omega^{\epsilon,L}} |D (F\circ T^{\epsilon,L})|^2d\mu^{\epsilon,L}\\
&\le a^{\epsilon,L} (c^{\epsilon,L})^2 \int_{\Omega^{\epsilon,L}} |(\nabla F)\circ T^{\epsilon,L}|^2d\mu^{\epsilon,L}=a^{\epsilon,L}(c^{\epsilon,L})^2\int_{\R^{\Lambda_{\epsilon,L}}} |\nabla F|^2d\nu^{\epsilon,L}.
\end{align*}
We conclude that $\nu^{\epsilon,L}$ satisfies a Poincar\'e inequality with constant $a^{\epsilon,L}(c^{\epsilon,L})^2$. With the above computation in hand, it is clear that, in order to prove functional inequalities for $\nu^{\epsilon,L}$, we should find a measure $\mu^{\epsilon,L}$ and a $c^{\epsilon,L}$-Lipschitz transport map $T^{{\epsilon,L}}$ from $\mu^{\epsilon,L}$ to $\nu^{\epsilon,L}$, such that the constant $a^{\epsilon,L}(c^{\epsilon,L})^2$ is well-behaved with respect to $\epsilon$ and $L$. This is exactly what we accomplish in this work, by taking advantage of the known results about the multiscale Bakry-\'Emery criteria proven by Bauerschmidt \& Bodineau \cite{bauerschmidt2020spectral, BB} and  Bauerschmidt \& Dagallier  \cite{bauerschmidt2022log, bauerschmidt2022logPhi}. For example, following \cite{MS1,MS2}, one could deduce, for the two dimensional massive sine-Gordon model, $\Psi$-log-Sobolev inequalities (which generalize the Poincar\'e and log-Sobolev inequalities), $p$-Poincar\'e inequalities, isoperimetric comparisons (between $\gamma^{\epsilon,L}$ and $\nu^{\epsilon,L}$), and eigenvalues comparisons (between the generators associated with $\gamma^{\epsilon,L}$ and $\nu^{\epsilon,L}$), as well as other functional inequalities, with constants which are independent of the ultraviolet cutoff and, in certain regimes, of the infrared cutoff---see Example \ref{ex:SG}.

\subsection{The Langevin transport map}
Consider a measure $\mu^{\epsilon,L}$ on $\R^{\Lambda_{\epsilon,L}}$ with it associated Langevin dynamics
\begin{align}
\label{eq:Langevin}
d\Phi_t=\nabla\log \left(\frac{d\mu^{\epsilon,L}}{d\varphi}\right)(\Phi_t)dt+\sqrt{2} dB_t,\quad \Phi_0\sim \nu^{\epsilon,L},
\end{align}
where $(B_t)_{t\ge 0}$ is a standard Brownian motion in $\R^{\Lambda_{\epsilon,L}}$. Let $p_t:=\text{Law}(\Phi_t)$ to get a flow of probability measures interpolating between $p_0=\nu^{\epsilon,L}$ and $p_{\infty}=\mu^{\epsilon,L}$. The flow $(p_t)_{t\ge 0}$ satisfies a continuity (Fokker-Planck) equation,
\[
\partial_t p_t(\varphi)=\nabla\cdot\left(p_t(\varphi)\nabla u_t(\varphi)\right)\quad \forall \varphi\in\R^{\Lambda_{\epsilon,L}}, \quad t\ge 0,
\]
where, for any $t\ge 0$, $\nabla u_t$ is the vector field driving the flow, obtained from some function $u_t:\R^{\Lambda_{\epsilon,L}}\to \R$. Going from the Eulerian to the Lagrangian perspective, the vector field $\nabla u_t$ induces a diffeomorphism $S_t:\R^{\Lambda_{\epsilon,L}}\to \R^{\Lambda_{\epsilon,L}}$, which transports $\nu^{\epsilon,L}=p_0$ into $p_t$,  via the equation 
\[
\partial_t S_t(\varphi)=-\nabla u_t\left(S_t(\varphi)\right),\quad S_0=\Id,
\]
see \cite[Theorem 5.34]{villani2003topics}. Setting $T_t$ to be the inverse of $S_t$, and defining
\[
T^{\epsilon,L}:=\lim_{t\uparrow \infty}T_t,
\]
we get the \emph{Langevin transport map} between $p_{\infty}=\mu^{\epsilon,L}$ to $p_0=\nu^{\epsilon,L}$.

In our setting, we take $\mu^{\epsilon,L}=\gamma^{\epsilon,L}$, the 	Gaussian free field. On a conceptual level, this choice is motivated by viewing interacting field theories as non-linear transformations of Gaussian free fields. On a technical level, the choice of $\mu^{\epsilon,L}=\gamma^{\epsilon,L}$ is beneficial for two reasons:
\begin{itemize}
\item Due to the special Gaussian structure, numerous functional inequalities are known to hold  for $\gamma^{\epsilon,L}$, with a constant $a^{\epsilon,L}$ which is independent of both $\epsilon$ and $L$. This is a manifestation of the dimension-free nature of the Gaussian.

\item The vector field $\nabla u_t$  can be explicitly computed as $\nabla u_t(\varphi)=\nabla \log U_t\left(\frac{d\nu^{\epsilon,L}}{d\gamma^{\epsilon,L}}\right)(\varphi)$, where $(U_t)_{t\ge 0}$ is the Ornstein-Uhlenbeck semigroup associated to the Ornstein-Uhlenbeck dynamics \eqref{eq:Langevin}. When the multiscale Bakry-\'Emery criterion can be verified, the  vector field $\nabla u_t$ can be controlled and, hence, the transport map  $T^{\epsilon,L}$ can be shown to be Lipschitz.
\end{itemize}

\begin{remark}
The multiscale Bakry-\'Emery criteria are used in  \cite{bauerschmidt2020spectral, BB, bauerschmidt2022log, bauerschmidt2022logPhi} to control the Polchinski semigroup, and the proofs of the Poincar\'e and log-Sobolev inequalities for $\nu^{\epsilon,L}$ proceed in the same vein as the general Bakry-\'Emery theory \cite{bakry2014analysis}. The Polchinski semigroup interpolates between a Dirac mass $\delta_0$ and $\nu^{\epsilon,L}$, corresponding to the continuum of scales used in the renormalization procedure. However, this interpolation does not correspond to a transport map on  $\R^{\Lambda_{\epsilon,L}}$ (but see section \ref{sec:BM}), since no transport map can exist between a Dirac mass and a non-trivial model. In contrast, we work with the Ornstein-Uhlenbeck semigroup where we blow up the Dirac mass by scaling it in such a way that it becomes a Gaussian $\gamma^{\epsilon,L}$, and hence can be transported into $\nu^{\epsilon,L}$. Indeed, as mentioned above, the perspective we take in this work is that one way of analyzing properties of various field theories is by viewing them as non-linear transformations of Gaussian free fields. The renormalization flow induces a flow of non-linear transformations $(T_t)$, and if the non-linear transformations (i.e., the transport maps) are Lipschitz, then the field theories along the flow, and in particular in the limit, do not differ by much from the free fields theories. In principle, this perspective can conceivably be implemented without the regularization of the infrared and ultraviolet cutoffs since the Ornstein-Uhlenbeck semigroup is well-defined in infinite dimensions \cite{bogachev1998gaussian}; but the situation in infinite dimensions is more delicate and interesting \cite{faris2001ornstein}. 
\end{remark}

\subsection*{Related literature}
In the context of functional inequalities, the idea of using the Langevin flow to construct transport maps goes back to at least Otto \& Villani \cite{otto2000generalization}. The first to show that the Langevin transport map enjoys Lipschitz properties were Kim \& Milman \cite{kim2012generalization}, and the work of Mikulincer and the author \cite{MS2} substantially extended  the Lipschitz properties of this transport map. Both works used the multiscale Bakry-\'Emery criterion, implicitly. Further Lipschitz properties of the Langevin transport map can be found in the work of Klartag \& Putterman \cite{klartag2021spectral} and Neeman \cite{neeman2022lipschitz}.  While Cotler and Rezchikov \cite{cotler2022renormalization} recently made a connection between optimal transport and exact renormalization groups, it was shown by Tanana \cite{tanana2021comparison} that, in general, the Langevin transport map is not the same as the optimal transport map. Finally, a connection between renormalization and Ornstein-Uhlenbeck semigroups is discussed by Faris in \cite{faris2001ornstein}.

\subsection{The Polchinski equation and multiscale Bakry-\'Emery criteria}
\label{subsec:BE}

Let $\gamma^{\epsilon,L}$ be the Gaussian measure on $\R^{\Lambda_{\epsilon,L}}$ with covariance matrix $A_{\epsilon}^{-1}$. Let $Q_t:=e^{-t\frac{A_{\epsilon}}{2}}$, $\dot{C}_t:=Q_t^2=e^{-tA_{\epsilon}}$, and $C_t:=\int_0^t\dot{C}_sds$ so $C_{\infty}:=\int_0^{\infty}\dot{C}_sds=A_{\epsilon}^{-1}$. With $V_0:=V_0^{\epsilon,L}$, let 
\begin{align}
\label{eq:Vt}
V_t(\varphi):=-\log \EE_{C_t}[e^{-V_0(\varphi+\zeta)}],
\end{align}
where the expectation $\EE_{C_t}$ stands for an expectation with respect to a centered Gaussian measure on $\R^{\Lambda_{\epsilon,L}}$ with covariance $C_t$, taken over the variable $\zeta$. The function $V_t$ satisfies the Polchinski equation \cite[eq. (1.10)]{BB},
\begin{align}
\label{eq:polchinski}
\partial_tV_t=\frac{1}{2}\Delta_{\dot{C}_t}V_t-\frac{1}{2}(\nabla V_t)_{\dot{C}_t}^2,
\end{align}
with the notation, for a matrix $M$ and a function $F:\R^{\Lambda_{\epsilon,L}}\to \R$,
\[
(\varphi,\phi)_M:=\sum_{ij}M_{ij}\varphi_i\phi_j,\quad (\varphi)_M^2:=(\varphi,\varphi)_M,\quad \Delta_MF:=(\nabla,\nabla)_MF;
\]
the subscript $M$ is omitted when $M=\Id$. 

\begin{definition}
\label{def:BE}
A model $\nu^{\epsilon,L}$ is said to satisfy the \emph{multiscale Bakry-\'Emery criterion} if there exist real numbers $\dot{\lambda}_t$ (possibly negative) such that, for any $t\ge 0$ and $\varphi\in \R^{\Lambda_{\epsilon,L}}$,
\[
Q_t\nabla^2V_t(\varphi)Q_t	\succeq \dot{\lambda}_t\Id, \quad\textnormal{where } Q_t=e^{-t\frac{A_{\epsilon}}{2}}. 
\]
\end{definition}

\subsection{Results}
Our main result is that if a model satisfies  the multiscale Bakry-\'Emery criterion of Definition \ref{def:BE}, then the Langevin transport map is Lipschitz. For simplicity, we assume that $\nu^{\epsilon,L}$ is smooth, but this assumption can often be removed by approximation, see \cite{MS2}. 
\begin{theorem*}
Suppose that a smooth model $\nu^{\epsilon,L}$ satisfies the multiscale Bakry-\'Emery criterion of Definition \ref{def:BE}. Then, the Langevin transport map $T^{\epsilon,L}$, which pushes forward $\gamma^{\epsilon,L}$ to $\nu^{\epsilon,L}$, is $\exp\left(-\frac{1}{2}\int_0^{\infty}\dot{\lambda}_tdt\right)$-Lipschitz. 
\end{theorem*}

Once a Lipschitz transport map between $\gamma^{\epsilon,L}$ to $\nu^{\epsilon,L}$ is constructed, there are standard techniques of transferring  functional inequalities from the Gaussian $\gamma^{\epsilon,L}$ to the model $\nu^{\epsilon,L}$. This was demonstrated in section \ref{subsec:transportfunct} for the Poincar\'e inequality, but many other functional inequalities can be transferred. Let us mention a few of them (see \cite{MS2} for some more). In the rest of this section, $\nu^{\epsilon,L}$ is a measure satisfying the multiscale Bakry-\'Emery criterion, $c^{\epsilon,L}:=\exp\left(-\frac{1}{2}\int_0^{\infty}\dot{\lambda}_tds\right)$, and $|A_{\epsilon}^{-1}|_{\text{op}}$ is the maximal eigenvalue of $A_{\epsilon}^{-1}$. The constant $ c^{\epsilon,L}|A_{\epsilon}^{-1}|_{\text{op}}$ is what appears in the various functional inequalities in this section.

\begin{example}{\textnormal{(\textbf{The two-dimensional massive sine-Gordon model})}}
\label{ex:SG}
$~$

Since $d=2$ we have
\[
\gamma^{\epsilon,L}(d\varphi)\propto \exp\left[-\frac{1}{2}\sum_{x\in \Lambda_{\epsilon,L}}\left(\varphi_x(-\Delta^1+\epsilon^2 m)\varphi_x\right)\right]d\varphi=\exp\left[-\frac{1}{2}(\varphi,A_{\epsilon}\varphi)\right]d\varphi,
\]
where $(\Delta^1\varphi)_x=\sum_{y\sim x}(\varphi_y-\varphi_x)$, and where  we note that 
\[
|A_{\epsilon}^{-1}|_{\textnormal{op}}\le \frac{1}{m\epsilon^2}.
\]
With
\[
V_0^{\epsilon,L}(\varphi)\propto -\sum_{x\in \Lambda_{\epsilon,L}}2z\epsilon^{2-\beta/4\pi}\cos (\sqrt{\beta}\varphi_x),
\]
the two-dimensional massive sine-Gordon model reads
\[
\nu^{\epsilon,L}(d\varphi)= e^{-V_0^{\epsilon,L}(\varphi)}\gamma^{\epsilon,L}(d\varphi).
\]
By \cite[Proposition 3.1]{BB}, if $\beta<6\pi$, then the multiscale Bakry-\'Emery criterion of Definition \ref{def:BE} holds with 
\begin{align*}
|\lambda_t|\le \lambda^*,
\end{align*}
where $\lambda_t:=\int_0^t\dot{\lambda}_sds$ and $\lambda^*=\lambda^*(\beta,z,m,L)$ is independent of $\epsilon$.  Moreover, there exists $\delta_{\beta}>0$ such that, if
\begin{align*}
Lm\ge 1\quad\text{and}\quad |z|m^{-2+\beta/4\pi}\le \delta_{\beta},
\end{align*}
then $\lambda^*=O_{\beta}(|z|m^{-2+\beta/4\pi})$ uniformly in $L$. In other words, we get that $c^{\epsilon,L}:=e^{-\frac{\lambda^*}{2}}$ is independent of $\epsilon$ and, in certain regimes, independent of $L$.  Hence,
\begin{align*}
[c^{\epsilon,L}]^2|A_{\epsilon}^{-1}|_{\textnormal{op}}\le \frac{e^{-\lambda^*}}{m\epsilon^2},
\end{align*}
and the constant $\frac{e^{-\lambda^*}}{m\epsilon^2}$ will be the order of the constant appearing in the various functional inequalities in this section. Since the continuum normalization of the Dirichlet form is of order $\epsilon^{-2}$ \cite[equation (1.17) and proof of Theorem 1.6]{BB}, we get that the constant appearing in the functional inequalities is of order $\frac{e^{-\lambda^*}}{m}$, i.e., independent of $\epsilon$, and in certain regimes, independent of $L$. For more information on the convergence of the measures $\nu^{\epsilon,L}$ as $\epsilon\downarrow 0$ and $L\uparrow \infty$, we refer to \cite{BB} and references therein.
\end{example}

We start by establishing $\Psi$-Sobolev inequalities, which are generalizations of the Poincar\'e and log-Sobolev inequalities. 
\begin{definition}
\label{def:PHIsob}
Let $\mathcal I$ be a closed interval (possibly unbounded) and let $\Psi:\mathcal I\to \R$ be a twice-differentiable function. We say that $\Psi$ is a \emph{divergence} if each of the functions $\Psi, \Psi'', -\frac{1}{\Psi''}$ is convex. Given a probability measure $\eta$ on $\R^{\Lambda_{\epsilon,L}}$, and a function $F:\R^{\Lambda_{\epsilon,L}}\to\mathcal I$ satisfying $\int F d\eta\in\mathcal I$, we define
\[
\Ent_{\eta}^{\Psi}(F):=\int_{\R^{\Lambda_{\epsilon,L}}}\Psi(F)d\eta-\Psi\left(\int_{\R^{\Lambda_{\epsilon,L}}}F\,d\eta\right).
\]
\end{definition}
Some classical examples of divergences are $\Psi(x)=x^2$ with $\mathcal I=\R$ (Poincar\'e inequality), $\Psi(x)=x\log x$ with $\mathcal I=\R_{\ge 0}$ (log-Sobolev inequality), and $\Psi(x)=x^p$ with $\mathcal I=\R_{\ge 0}$ and $1<p<2$. 

By \cite[Corollary 9]{chafai2002entropies}, the measure  $\gamma^{\epsilon,L}$ satisfies
\[
\Ent_{\gamma^{\epsilon,L}}^{\Psi}(F)\le \frac{|A_{\epsilon}^{-1}|_{\textnormal{op}}}{2} \int_{\R^{\Lambda_{\epsilon,L}}}\Psi''(F)|\nabla F|^2d\gamma^{\epsilon,L}.
\]
The transport method thus yields (cf. \cite[proof of Theorem 5.3]{MS1}):

\begin{theorem}{\textnormal{($\Psi$-Sobolev inequalities)}}
\label{thm:Sobolev}
Let $\Psi:\mathcal I\to \R$ be a divergence and let $F:\R^{\Lambda_{\epsilon,L}}\to\mathcal I$ be any continuously differentiable function satisfying $\int_{\R^{\Lambda_{\epsilon,L}}} F^2d\nu^{\epsilon,L}\in\mathcal I$. Then,
\[
\Ent_{\nu^{\epsilon,L}}^{\Psi}(F)\le [c^{\epsilon,L}]^2\frac{|A_{\epsilon}^{-1}|_{\textnormal{op}}}{2} \int_{\R^{\Lambda_{\epsilon,L}}}\Psi''(F)|\nabla F|^2d\nu^{\epsilon,L}.
\]
\end{theorem}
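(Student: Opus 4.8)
The plan is to derive Theorem \ref{thm:Sobolev} from the Main theorem by the generic transport-of-measure argument, exactly as sketched for the Poincar\'e inequality in Section \ref{subsec:transportfunct}. First I would invoke the Main theorem: since $\nu^{\epsilon,L}$ satisfies the multiscale Bakry-\'Emery criterion, there is a map $T^{\epsilon,L}$ pushing $\gamma^{\epsilon,L}$ forward to $\nu^{\epsilon,L}$ with $|DT^{\epsilon,L}|\le c^{\epsilon,L}$, where $c^{\epsilon,L}=\exp\left(\frac{1}{2}\int_0^{\infty}\dot\lambda_t\,ds\right)$. Next I would recall the Gaussian $\Psi$-Sobolev inequality of Chafa\"i quoted just above the statement: for $\Psi$ a divergence and any suitable $G:\R^{\Lambda_{\epsilon,L}}\to\mathcal I$,
\[
\Ent_{\gamma^{\epsilon,L}}^{\Psi}(G)\le \frac{|A_{\epsilon}^{-1}|_{\textnormal{op}}}{2}\int_{\R^{\Lambda_{\epsilon,L}}}\Psi''(G)|\nabla G|^2\,d\gamma^{\epsilon,L}.
\]
Apply this with $G:=F\circ T^{\epsilon,L}$.

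The key algebraic steps are then: (i) by the change-of-variables/pushforward identity, $\Ent_{\gamma^{\epsilon,L}}^{\Psi}(F\circ T^{\epsilon,L})=\Ent_{\nu^{\epsilon,L}}^{\Psi}(F)$, since both the integral of $\Psi(F\circ T^{\epsilon,L})$ against $\gamma^{\epsilon,L}$ and the mean $\int F\circ T^{\epsilon,L}\,d\gamma^{\epsilon,L}$ transfer verbatim to integrals against $\nu^{\epsilon,L}$ (this is also what makes the hypothesis $\int F^2 d\nu^{\epsilon,L}\in\mathcal I$, hence $\int F\,d\nu^{\epsilon,L}\in\mathcal I$ by convexity of $\mathcal I$, the right normalization condition); (ii) by the chain rule, $\nabla(F\circ T^{\epsilon,L})=(DT^{\epsilon,L})^{*}(\nabla F)\circ T^{\epsilon,L}$, so the Lipschitz bound gives $|\nabla(F\circ T^{\epsilon,L})|^2\le (c^{\epsilon,L})^2\,|\nabla F|^2\circ T^{\epsilon,L}$ pointwise; (iii) substituting into the right-hand side and using the pushforward again,
\[
\int\Psi''(F\circ T^{\epsilon,L})\,|\nabla(F\circ T^{\epsilon,L})|^2\,d\gamma^{\epsilon,L}\le (c^{\epsilon,L})^2\int \Psi''(F)\,|\nabla F|^2\,d\nu^{\epsilon,L}.
\]
Chaining (i)–(iii) through Chafa\"i's inequality yields $\Ent_{\nu^{\epsilon,L}}^{\Psi}(F)\le c^{\epsilon,L}\frac{|A_{\epsilon}^{-1}|_{\textnormal{op}}}{2}(c^{\epsilon,L})\int\Psi''(F)|\nabla F|^2 d\nu^{\epsilon,L}$; note the stated constant is $c^{\epsilon,L}$ rather than $(c^{\epsilon,L})^2$, which I would reconcile by checking whether the intended reading is $c^{\epsilon,L}:=\exp\left(\int_0^{\infty}\dot\lambda_t\,ds\right)$ in this section, or whether the Lipschitz bound should enter linearly because $\Psi''$ is also being composed — in either case the argument is structurally the same and only the exponent bookkeeping changes.

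The main obstacle is regularity and integrability justification rather than the inequality chain itself: one must ensure $F\circ T^{\epsilon,L}$ is admissible in Chafa\"i's inequality (continuously differentiable a.e.\ suffices since $T^{\epsilon,L}$ is, by the Main theorem's construction via the smooth flow $S_t$ and its inverse, a diffeomorphism under the smoothness hypothesis on $\nu^{\epsilon,L}$), and that all integrals are finite so that the subtraction defining $\Ent^{\Psi}$ is meaningful. This is exactly the point where the blanket smoothness assumption on $\nu^{\epsilon,L}$ is used, and where one would, in the non-smooth case, pass to the limit along approximations as in \cite{MS2}. I would also remark that the hypothesis is stated with $F^2$ (so that the theorem covers $\Psi(x)=x^2$ and $\Psi(x)=x\log x$ applied to $F^2$ uniformly), and that the cases $\mathcal I=\R_{\ge 0}$ require $F\ge 0$; these are inherited directly from the corresponding constraints in Chafa\"i's inequality under $T^{\epsilon,L}$, since $T^{\epsilon,L}$ does not change the sign or range of $F$. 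No genuinely new estimate is needed beyond the Main theorem and \cite[Corollary 9]{chafai2002entropies}.
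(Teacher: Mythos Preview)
Your approach is exactly the one the paper intends: it does not write out a proof but says ``The transport method thus yields (cf.\ \cite[proof of Theorem 5.3]{MS1})'', i.e., apply Chafa\"i's Gaussian $\Psi$-Sobolev inequality to $G=F\circ T^{\epsilon,L}$, use the pushforward identity for $\Ent^{\Psi}$, and control $|\nabla(F\circ T^{\epsilon,L})|^2$ via the Lipschitz bound from the Main theorem. Your observation about the constant is also on target: the argument as written in Section~\ref{subsec:transportfunct} produces $(c^{\epsilon,L})^2$ rather than $c^{\epsilon,L}$, and since the paper explicitly fixes $c^{\epsilon,L}=\exp\bigl(\tfrac12\int_0^\infty\dot\lambda_t\,dt\bigr)$ (the Lipschitz constant itself), the stated exponent in Theorems~\ref{thm:Sobolev}--\ref{thm:eigenvalues} appears to be a bookkeeping slip, not a different proof.
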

Next we describe the $p$-Poincar\'e inequalities which are a different generalization of the Poincar\'e inequality. The measure  $\gamma^{\epsilon,L}$ satisfies (by a change of variables $x\mapsto A_{\epsilon}x$ in \cite[Theorem 2.6]{addona2021equivalence}),
\[
\int_{\R^{\Lambda_{\epsilon,L}}}F^p d\gamma^{\epsilon,L}\le \alpha_p^p|A_{\epsilon}^{-1}|_{\textnormal{op}}^{p/2} \int_{\R^{\Lambda_{\epsilon,L}}}|\nabla F|^p d\gamma^{\epsilon,L},
\]
where 
\[
\alpha_p:=
\begin{cases}
(p-1)^{1/2}\quad\text{for }p\in [2,\infty)\\
\frac{\pi}{2}\quad\text{for }p\in [1,2).
\end{cases}
\]
The transport method thus yields (cf.  \cite[Theorem 5.4]{MS1}):
\begin{theorem}{\textnormal{($p$-Poinacr\'e inequalities)}}
\label{thm:ppoincare}
Let $p\in [1,\infty)$ and let $F:\R^{\Lambda_{\epsilon,L}}\to\R$  be any continuously differentiable function satisfying $\int_{\R^{\Lambda_{\epsilon,L}}} Fd\nu^{\epsilon,L}=0$ and $F,\nabla F\in L^p(\gamma^{\epsilon,L})$. Then,
\[
\int_{\R^{\Lambda_{\epsilon,L}}}F^p d\nu^{\epsilon,L}\le [c^{\epsilon,L}]^p\,\alpha_p^p |A_{\epsilon}^{-1}|_{\textnormal{op}}^{p/2}\int_{\R^{\Lambda_{\epsilon,L}}}|\nabla F|^p d\nu^{\epsilon,L}.
\]
\end{theorem}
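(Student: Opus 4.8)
The plan is to derive Theorem \ref{thm:ppoincare} from the Main theorem by the transport-transfer mechanism already illustrated for the Poincar\'e inequality in Section \ref{subsec:transportfunct}. The two inputs are the $c^{\epsilon,L}$-Lipschitz Langevin transport map $T^{\epsilon,L}$ pushing $\gamma^{\epsilon,L}$ forward to $\nu^{\epsilon,L}$, furnished by the Main theorem, and the $p$-Poincar\'e inequality for the Gaussian $\gamma^{\epsilon,L}$ recalled just before the statement.

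Given $F$ as in the statement, I would set $G:=F\circ T^{\epsilon,L}$. The pushforward identity $T^{\epsilon,L}_{\#}\gamma^{\epsilon,L}=\nu^{\epsilon,L}$ gives $\int G\,d\gamma^{\epsilon,L}=\int F\,d\nu^{\epsilon,L}=0$ and $\int F^p\,d\nu^{\epsilon,L}=\int G^p\,d\gamma^{\epsilon,L}$. Since $\nu^{\epsilon,L}$ is smooth the map $T^{\epsilon,L}$ is $C^1$, so $G$ is $C^1$, and the chain rule gives $\nabla G(\varphi)=\big(DT^{\epsilon,L}(\varphi)\big)^{*}(\nabla F)\big(T^{\epsilon,L}(\varphi)\big)$; hence the Lipschitz bound $|DT^{\epsilon,L}|_{\textnormal{op}}\le c^{\epsilon,L}$ supplied by the Main theorem yields the pointwise estimate $|\nabla G|\le c^{\epsilon,L}\,\big|(\nabla F)\circ T^{\epsilon,L}\big|$. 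Together with the pushforward identity this also transfers the hypotheses $F,\nabla F\in L^p$ into $G,\nabla G\in L^p(\gamma^{\epsilon,L})$, so the Gaussian $p$-Poincar\'e inequality applies to $G$. Applying it, then inserting the pointwise gradient bound and changing variables back through $T^{\epsilon,L}_{\#}\gamma^{\epsilon,L}=\nu^{\epsilon,L}$, gives
\begin{align*}
\int_{\R^{\Lambda_{\epsilon,L}}}F^p\,d\nu^{\epsilon,L}
&=\int_{\R^{\Lambda_{\epsilon,L}}}G^p\,d\gamma^{\epsilon,L}
\le \alpha_p\,|A_\epsilon^{-1}|_{\textnormal{op}}^{p/2}\int_{\R^{\Lambda_{\epsilon,L}}}|\nabla G|^p\,d\gamma^{\epsilon,L}\\
&\le (c^{\epsilon,L})^p\,\alpha_p\,|A_\epsilon^{-1}|_{\textnormal{op}}^{p/2}\int_{\R^{\Lambda_{\epsilon,L}}}|\nabla F|^p\,d\nu^{\epsilon,L},
\end{align*}
which is the asserted bound; the constant is of the advertised order $c^{\epsilon,L}|A_\epsilon^{-1}|_{\textnormal{op}}$ (here with $c^{\epsilon,L}$ entering to the $p$-th power and an extra factor $\alpha_p$), and for $p=2$ it reproduces the Poincar\'e constant of Section \ref{subsec:transportfunct}.

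I do not expect a genuine obstacle: the whole weight of the argument is carried by the Main theorem, which is at our disposal, and by the quoted Gaussian $p$-Poincar\'e inequality. The only points requiring routine care are (i) verifying that $G=F\circ T^{\epsilon,L}$ and $\nabla G$ have the regularity and $L^p(\gamma^{\epsilon,L})$ integrability needed to apply the Gaussian inequality --- which follows from the smoothness of $\nu^{\epsilon,L}$, the pushforward identity and the Lipschitz bound, with the approximation argument of \cite{MS2} available if $F$ is only $C^1$ --- and (ii) the chain rule for the ($C^1$, indeed smooth) map $T^{\epsilon,L}$. In effect the proof is identical to that of Theorem \ref{thm:Sobolev} and to the Poincar\'e computation of Section \ref{subsec:transportfunct}, with $|\nabla\cdot|^p$ replacing $|\nabla\cdot|^2$.
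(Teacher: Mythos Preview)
Your proposal is correct and is exactly the transport-transfer argument the paper intends: it does not give an explicit proof but refers to the transport method (``cf.\ \cite[Theorem 5.4]{MS1}''), which is precisely the computation you carry out. You also correctly observe that the argument actually produces the factor $(c^{\epsilon,L})^p$ rather than the $c^{\epsilon,L}$ printed in the statement; the paper is deliberately loose about this, declaring only that the constants are ``of order $c^{\epsilon,L}|A_\epsilon^{-1}|_{\textnormal{op}}$''.
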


The next theorem also follows by the transport method but the argument is a bit different from the ones in the preceding results. We start by recalling that the Gaussian isoperimetric inequality plays a crucial role in high-dimensional probability, e.g., in the concentration of measure phenomenon, \cite{ledoux1996isoperimetry,ledoux2001concentration}. By changing variables $x\mapsto A_{\epsilon}x$ in  \cite[Theorem 3.1]{borell1975brunn}, the Gaussian isoperimetric inequality can be stated as 
\[
\gamma^{\epsilon,L}\left(K+rB\right)\ge \Phi\left(a+\frac{r}{\sqrt{ |A_{\epsilon}^{-1}|_{\textnormal{op}}}}\right),
\]
where $K\subset \R^{\Lambda_{\epsilon,L}}$ is any Borel set, $B\subset \R^{\Lambda_{\epsilon,L}}$ is the unit ball, $r\ge 0$, $\Phi$ is the cumulative distribution function of the standard one-dimensional Gaussian, and $a$ is such that $\gamma^{\epsilon,L}(K)=\Phi(a)$. The transport method thus yield (cf. \cite[Theorem 5.5]{MS1}):

\begin{theorem}{\textnormal{(Isoperimetric comparison)}}
\label{thm:iso}
\[
\nu^{\epsilon,L}\left(K+rB\right)\ge \Phi\left(a+\frac{r}{c^{\epsilon,L}\sqrt{ |A_{\epsilon}^{-1}|_{\textnormal{op}}}}\right).
\]
\end{theorem}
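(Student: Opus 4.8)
The plan is to push the Gaussian isoperimetric inequality forward through the Lipschitz transport map, in exactly the spirit of the Poincar\'e computation in Section \ref{subsec:transportfunct}, but now working with sets rather than functions. Write $T:=T^{\epsilon,L}$ and $c:=c^{\epsilon,L}$; here $a$ is, as in the Gaussian statement recalled before the theorem, the number determined by $\nu^{\epsilon,L}(K)=\Phi(a)$. By the Main theorem, $T$ pushes $\gamma^{\epsilon,L}$ forward to $\nu^{\epsilon,L}$ and satisfies $|DT|\le c$; since $\R^{\Lambda_{\epsilon,L}}$ is convex, integrating this differential bound along the segment joining two points would upgrade it to the global Lipschitz estimate $|T(x)-T(y)|\le c\,|x-y|$.

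First I would establish the set inclusion
\[
T^{-1}(K)+\tfrac{r}{c}\,B\ \subseteq\ T^{-1}\!\left(K+rB\right),
\]
valid for every Borel $K\subseteq\R^{\Lambda_{\epsilon,L}}$ and every $r\ge 0$: if $y\in T^{-1}(K)$ and $|v|\le r/c$, then $T(y)\in K$ and $|T(y+v)-T(y)|\le c\,|v|\le r$, so $T(y+v)\in K+rB$, i.e.\ $y+v\in T^{-1}(K+rB)$.

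Next I would combine this with the pushforward property and the Gaussian isoperimetric inequality. Since $T$ pushes $\gamma^{\epsilon,L}$ forward to $\nu^{\epsilon,L}$, we have $\gamma^{\epsilon,L}\!\left(T^{-1}(K)\right)=\nu^{\epsilon,L}(K)=\Phi(a)$, so $a$ is precisely the isoperimetric parameter of the set $T^{-1}(K)$ with respect to $\gamma^{\epsilon,L}$. Applying the Gaussian isoperimetric inequality to $T^{-1}(K)$ with dilation parameter $r/c$, then monotonicity of $\gamma^{\epsilon,L}$ together with the inclusion above, gives
\[
\nu^{\epsilon,L}\!\left(K+rB\right)=\gamma^{\epsilon,L}\!\left(T^{-1}(K+rB)\right)\ \ge\ \gamma^{\epsilon,L}\!\left(T^{-1}(K)+\tfrac{r}{c}\,B\right)\ \ge\ \Phi\!\left(a+\tfrac{r}{c}\right),
\]
which is the desired inequality.

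The argument is soft, so the only point needing care---and the one I would treat as the main obstacle---is measurability: one must know that $T$ is a genuine, everywhere-defined map (which it is, being continuous, as it is the limit of the diffeomorphisms $T_t$), so that $T^{-1}(K)$ is Borel and the identity $\gamma^{\epsilon,L}(T^{-1}(A))=\nu^{\epsilon,L}(A)$ holds for all Borel $A$ (and, at worst, for sets measurable with respect to the relevant completions, which is all that is used for the Minkowski enlargement $K+rB$). If one only disposes of the transport map $\gamma^{\epsilon,L}$-almost everywhere, as in the approximation arguments of \cite{MS2}, one would run the above on a continuous modification and pass to the limit using continuity of the Gaussian isoperimetric profile $\Phi$; I do not expect a genuine difficulty there.
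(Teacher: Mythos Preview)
Your argument is correct and is exactly the standard transport method proof the paper has in mind: the paper does not spell out a proof but simply refers to \cite[Theorem 5.5]{MS1}, whose argument is precisely the set inclusion $T^{-1}(K)+\frac{r}{c}B\subseteq T^{-1}(K+rB)$ from the Lipschitz bound, followed by the Gaussian isoperimetric inequality applied to $T^{-1}(K)$. There is nothing to add.
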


Finally, we turn to the issue of comparing the eigenvalues of the generators associated with $\gamma^{\epsilon,L}$ and $\nu^{\epsilon,L}$.  The constant in the Poincar\'e inequality for $\nu^{\epsilon,L}$ is the reciprocal of the first eigenvalue of the Langevin semigroup generator $\mathcal L(\nu^{\epsilon,L}):=\Delta+\left(\nabla\log \frac{d\nu^{\epsilon,L}}{d\varphi},\nabla\right)$, associated with $\nu^{\epsilon,L}$. Assuming the eigenvalues of $\mathcal L(\nu^{\epsilon,L})$ are discrete, let $\lambda_i(\nu^{\epsilon,L})$, $\lambda_i(\gamma^{\epsilon,L})$ be the eigenvalues of the generators $\mathcal L(\nu^{\epsilon,L}), \mathcal L(\gamma^{\epsilon,L})$, respectively. By Theorem \ref{thm:Sobolev}, $\lambda_1(\gamma^{\epsilon,L})\le c ^{\epsilon,L}\lambda_1(\nu^{\epsilon,L})$. A result of E. Milman \cite[Theorem 1.7]{milman2018spectral} shows that the transport method can be used to compare all higher-order eigenvalues. The transport method thus yields (cf. \cite[Corollary 3]{MS2}):

\begin{theorem}{\textnormal{(Eigenvalues comparisons)}}
\label{thm:eigenvalues}
For every $i\in \mathbb Z_+$,
\[
\lambda_i(\gamma^{\epsilon,L})\le [c^{\epsilon,L}]^2\, \lambda_i(\nu^{\epsilon,L}).
\]
\end{theorem}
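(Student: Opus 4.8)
The plan is to combine the $c^{\epsilon,L}$-Lipschitz transport map $T^{\epsilon,L}$ supplied by the Main Theorem with the Courant--Fischer (min-max) description of the spectrum, transporting test subspaces from the variational formula for $\nu^{\epsilon,L}$ to that for $\gamma^{\epsilon,L}$; this is exactly the mechanism behind E.~Milman's comparison theorem \cite[Theorem 1.7]{milman2018spectral}, and the argument I have in mind is the one recorded in \cite[Corollary 3]{MS2}. Write $\mathcal E_{\eta}(F):=\int_{\R^{\Lambda_{\epsilon,L}}}|\nabla F|^2\,d\eta$ for the Dirichlet energy associated with $-\mathcal L(\eta)$. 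The spectrum of $\mathcal L(\gamma^{\epsilon,L})$ is discrete by Gaussianity, and that of $\mathcal L(\nu^{\epsilon,L})$ is discrete by assumption, so for every $i\in\mathbb Z_+$ one has
\[
\lambda_i(\eta)=\min_{W}\ \max_{0\neq F\in W}\ \frac{\mathcal E_{\eta}(F)}{\int_{\R^{\Lambda_{\epsilon,L}}}F^2\,d\eta},
\]
where the minimum runs over all $i$-dimensional subspaces $W$ of the form domain $H^1(\eta)$ whose elements have zero $\eta$-mean.

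First I would fix $i$ and let $W=\sspan\{\psi_1,\dots,\psi_i\}$ be the span of the first $i$ nonconstant eigenfunctions of $\mathcal L(\nu^{\epsilon,L})$, so that the inner maximum over $W$ equals $\lambda_i(\nu^{\epsilon,L})$; since $\nu^{\epsilon,L}$ is smooth, elliptic regularity makes the $\psi_j$ smooth. Next I push these functions through the transport map and set $\widetilde W:=\{F\circ T^{\epsilon,L}:F\in W\}$. Because $T^{\epsilon,L}_{\#}\gamma^{\epsilon,L}=\nu^{\epsilon,L}$, the pull-back $F\mapsto F\circ T^{\epsilon,L}$ is a linear isometry of $L^2(\nu^{\epsilon,L})$ into $L^2(\gamma^{\epsilon,L})$ that preserves means; in particular it is injective, so $\dim\widetilde W=i$, every element of $\widetilde W$ has zero $\gamma^{\epsilon,L}$-mean, and $\int(F\circ T^{\epsilon,L})^{2}\,d\gamma^{\epsilon,L}=\int F^{2}\,d\nu^{\epsilon,L}$. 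The chain rule $\nabla(F\circ T^{\epsilon,L})=(DT^{\epsilon,L})^{*}\big((\nabla F)\circ T^{\epsilon,L}\big)$ together with the Lipschitz bound of the Main Theorem then gives $\mathcal E_{\gamma^{\epsilon,L}}(F\circ T^{\epsilon,L})\le c^{\epsilon,L}\,\mathcal E_{\nu^{\epsilon,L}}(F)$ for $F\in W$ --- the Lipschitz constant entering the Dirichlet energy exactly as in the functional inequalities of Theorems \ref{thm:Sobolev}--\ref{thm:iso}, and this same bound with the $L^2$ identity puts $\widetilde W$ inside $H^1(\gamma^{\epsilon,L})$. Feeding $\widetilde W$ into the min-max formula for $\lambda_i(\gamma^{\epsilon,L})$ yields
\[
\lambda_i(\gamma^{\epsilon,L})\ \le\ \max_{0\neq F\in W}\frac{\mathcal E_{\gamma^{\epsilon,L}}(F\circ T^{\epsilon,L})}{\int_{\R^{\Lambda_{\epsilon,L}}}(F\circ T^{\epsilon,L})^{2}\,d\gamma^{\epsilon,L}}\ \le\ c^{\epsilon,L}\max_{0\neq F\in W}\frac{\mathcal E_{\nu^{\epsilon,L}}(F)}{\int_{\R^{\Lambda_{\epsilon,L}}}F^{2}\,d\nu^{\epsilon,L}}\ =\ c^{\epsilon,L}\lambda_i(\nu^{\epsilon,L}),
\]
which is the claim; the case $i=1$ is simply the Poincar\'e inequality already contained in Theorem \ref{thm:Sobolev}.

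The push-forward identities and the chain rule are routine; the step that needs genuine care is the functional-analytic bookkeeping around the min-max principle --- that the minimizing subspace for $\lambda_i(\nu^{\epsilon,L})$ really is spanned by eigenfunctions (this is where the assumed discreteness of the spectrum of $\mathcal L(\nu^{\epsilon,L})$ enters), and that the transported functions $F\circ T^{\epsilon,L}$ genuinely belong to the form domain $H^1(\gamma^{\epsilon,L})$ (which rests on the smoothness of the $\psi_j$, the Lipschitz bound, and the preservation of $L^2$-norms just noted). If one prefers to package all of this at once, the cleanest route is to check that the hypotheses of \cite[Theorem 1.7]{milman2018spectral} are satisfied by $(\gamma^{\epsilon,L},\nu^{\epsilon,L})$ together with the transport map $T^{\epsilon,L}$ and invoke it directly. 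I expect this verification, rather than the comparison inequality itself, to be the only real obstacle.
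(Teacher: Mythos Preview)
Your approach is essentially the paper's: the paper gives no detailed proof but simply invokes \cite[Theorem 1.7]{milman2018spectral} together with \cite[Corollary 3]{MS2}, and your Courant--Fischer argument---transporting the optimal test subspace for $\lambda_i(\nu^{\epsilon,L})$ through $T^{\epsilon,L}$ and using the Lipschitz bound on the Dirichlet energy---is precisely the content of those references. Your closing remark that one may alternatively verify the hypotheses of Milman's theorem and quote it directly is in fact all the paper does.
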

Let us note, however, that in contrast to Theorems \ref{thm:Sobolev}--\ref{thm:iso} which are dimension-free in nature, Theorem \ref{thm:eigenvalues} depends on the dimension; even the discreteness and multiplicities of the eigenvalues are dimensional phenomena.

\subsection*{Acknowledgments} Many thanks to Roland Bauerschmidt for interesting conversations on the topic of this paper as well as for providing feedback on this work. I also thank Dan Mikulincer (for telling me about \cite{barashkov2021variational}), Max Raginsky for careful comments that improved this manuscript, and the anonymous referee for helpful suggestions. This material is based upon work supported by the National Science Foundation under Award Number 2002022. No data is associated with this manuscript.

\section{The Langevin transport map}
\label{sec:LN}
Throughout this section we assume that $\nu^{\epsilon,L}$ is smooth (an assumption which can often be removed, see \cite{MS2}) and satisfies the multiscale Bakry-\'Emery criterion of Definition \ref{def:BE}.
Let $\mu^{\epsilon,L}=\gamma^{\epsilon,L}$ and let $(U_t)_{t\ge 0}$ be the standard Ornstein-Uhlenbeck semigroup associated to the Ornstein-Uhlenbeck dynamics \eqref{eq:Langevin},
\[
d\Phi_t=\nabla\log \left(\frac{d\gamma^{\epsilon,L}}{d\varphi}\right)(\Phi_t)dt+\sqrt{2} dB_t,\quad \Phi_0\sim \nu^{\epsilon,L},
\]
satisfying, for any test function $F:\R^{\Lambda_{\epsilon,L}} \to \R$,
\begin{align}
\label{eq:eqOU}
\partial_tU_tF(\varphi)=\Delta U_tF(\varphi)-\left(A_{\epsilon}\varphi,\nabla U_tF(\varphi)\right).
\end{align}
The continuity (Fokker-Planck) equation for $p_t:=\text{Law}(\Phi_t)$ reads 
\[
\partial_tp_t=\nabla\left(p_t\nabla\log \frac{dp_t}{d\gamma^{\epsilon,L}}\right),\quad p_0=\nu^{\epsilon,L},
\]
and, using the equation \eqref{eq:eqOU}, it can be checked that $p_t=(U_te^{-V_0})\gamma^{\epsilon,L}$, where we recall that $e^{-V_0}=\frac{d\nu^{\epsilon,L}}{d\gamma^{\epsilon,L}}$. Letting 
\[
u_t(\varphi):=\log U_te^{-V_0(\varphi)},
\]
we see that the vector field $\nabla u_t$ drives the flow $(p_t)_{t\ge 0}$ via the continuity equation.

In order to show how the multiscale Bakry-\'Emery criterion can be used to control $\nabla u_t$, we will provide an explicit solution of \eqref{eq:eqOU}. Recall first that $Q_t:=e^{-t\frac{A_{\epsilon}}{2}}$, $\dot{C}_t:=Q_t^2=e^{-tA_{\epsilon}}$, and $C_t:=\int_0^t\dot{C}_sds$ so $C_{\infty}:=\int_0^{\infty}\dot{C}_sds=A_{\epsilon}^{-1}$.
\begin{lemma}
\label{lem:OU}
For any test function $F:\R^{\Lambda_{\epsilon,L}} \to \R$,
\begin{align*}
\label{eq:OU}
U_{\frac{t}{2}}F(\varphi)=\EE_{C_{t}}\left[F(e^{-t\frac{A_{\epsilon}}{2}}\varphi+\zeta)\right]=\EE_{C_{t}}\left[F(Q_{t}\varphi+\zeta)\right].
\end{align*}
\end{lemma}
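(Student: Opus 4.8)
The plan is to verify the claimed formula by checking that the right-hand side solves the Ornstein--Uhlenbeck PDE \eqref{eq:eqOU} with the correct initial condition, and then invoke uniqueness of the solution. Concretely, define $G_s(\varphi):=\EE_{C_{2s}}[F(Q_{2s}\varphi+\zeta)]$ and show that $\partial_s G_s(\varphi)=\Delta G_s(\varphi)-(A_{\epsilon}\varphi,\nabla G_s(\varphi))$ with $G_0=F$; since $U_s F$ is the unique (say, bounded-growth) solution of \eqref{eq:eqOU} with initial datum $F$, this forces $G_s=U_s F$, which is exactly the assertion after relabelling $s=t/2$. The initial condition is immediate: at $s=0$ we have $Q_0=\Id$ and $C_0=0$, so the Gaussian degenerates to a point mass at $0$ and $G_0(\varphi)=F(\varphi)$.

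The computational heart is differentiating $G_s$ in $s$. The cleanest route is to write the Gaussian expectation as an integral against the density of $\EE_{C_{2s}}$ (or, better, as $\EE[F(Q_{2s}\varphi+C_{2s}^{1/2}\xi)]$ with $\xi$ a fixed standard Gaussian), so that the $s$-dependence sits in the two matrix-valued coefficients $Q_{2s}$ and $C_{2s}^{1/2}$. Using $\tfrac{d}{ds}Q_{2s}=-A_{\epsilon}Q_{2s}$ and $\tfrac{d}{ds}C_{2s}=2\dot{C}_{2s}=2Q_{2s}^2$, the chain rule produces a first-order (transport) term $-(A_{\epsilon}Q_{2s}\varphi,\ \nabla F(\cdots))$ and a term involving $\tfrac{d}{ds}C_{2s}^{1/2}$ paired against $\nabla F$; an integration by parts in the Gaussian variable (the Gaussian integration-by-parts identity $\EE[\xi\,H(\xi)]=\EE[\nabla H(\xi)]$, i.e. Stein's lemma) converts the half-derivative-of-$C^{1/2}$ term into a second-order term $\Delta F$ weighted by $Q_{2s}^2$. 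One then checks that, after expressing everything back in terms of $\varphi$ and $G_s$, the gradient and Hessian of $G_s$ absorb exactly the factors $Q_{2s}$ and $Q_{2s}^2$: indeed $\nabla_\varphi G_s(\varphi)=Q_{2s}\,\EE_{C_{2s}}[\nabla F(Q_{2s}\varphi+\zeta)]$ and $\nabla^2_\varphi G_s(\varphi)=Q_{2s}\,\EE_{C_{2s}}[\nabla^2 F(\cdots)]\,Q_{2s}$, so the coefficients match and one recovers precisely $\Delta G_s-(A_{\epsilon}\varphi,\nabla G_s)$, using that $A_{\epsilon}$ and $Q_{2s}$ commute. A convenient bookkeeping device is Mehler's formula: the identity to be proved is just the $A_{\epsilon}$-adapted version of Mehler's representation of the Ornstein--Uhlenbeck semigroup, where the pairing $Q_t^2+C_t=A_\epsilon^{-1}-\dots$ reflects the stationarity of $\gamma^{\epsilon,L}$; verifying $\partial_t$ of the exponent in Mehler's kernel reproduces the generator is a short calculation.

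The step I expect to be the main obstacle is organizing the matrix algebra and the Gaussian integration by parts cleanly, rather than any conceptual difficulty: one must carefully track that the same operator $Q_{2s}$ (and its square) appears both from differentiating the drift part $Q_{2s}\varphi$ and from the smoothing part $\zeta\sim\EE_{C_{2s}}$, and that $\tfrac{d}{ds}C_{2s}=2Q_{2s}^2$ is exactly what is needed to make the Laplacian term come out with the right coefficient; this is where the specific choices $\dot C_t=Q_t^2$, $C_\infty=A_\epsilon^{-1}$ are used. A mild technical point is justifying differentiation under the expectation and the integration by parts, which is routine for test functions $F$ (bounded with bounded derivatives, or Schwartz), and uniqueness of solutions to \eqref{eq:eqOU} in the relevant class, which is standard for the Ornstein--Uhlenbeck semigroup; alternatively, one can sidestep uniqueness entirely by checking the semigroup property $G_s\circ G_{s'}=G_{s+s'}$ directly from the Gaussian convolution identity $C_{2s}+Q_{2s}^2\,C_{2s'}\,Q_{2s}^{?}$ — but appealing to uniqueness is shorter and cleaner.
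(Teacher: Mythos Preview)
Your proposal is correct and follows essentially the same route as the paper: verify that both sides satisfy the same parabolic equation $\partial_t(\cdot)=\tfrac12\Delta(\cdot)-\tfrac12(A_\epsilon\varphi,\nabla(\cdot))$ with the same initial datum, and conclude by uniqueness. The only difference is cosmetic: the paper cites \cite[equation (4.1)]{faris2001ornstein} for the PDE satisfied by $\EE_{C_t}[F(Q_t\varphi+\zeta)]$, whereas you sketch the computation by hand via Gaussian integration by parts and the commutation of $A_\epsilon$, $Q_t$, $C_t$.
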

\begin{proof}
By \cite[equation (4.1)]{faris2001ornstein} (where we take, using the notation is \cite{faris2001ornstein}, $Q\mapsto\Id$ and $A\mapsto \frac{A_{\epsilon}}{2}$), 
\[
\partial_t\EE_{C_t}\left[F(Q_t\varphi+\zeta)\right]=\frac{1}{2}\Delta \EE_{C_t}\left[F(Q_t\varphi+\zeta)\right]-\frac{1}{2}\left(A_{\epsilon}\varphi,\nabla \EE_{C_t}\left[F(Q_t\varphi+\zeta)\right]\right).
\]
On the other hand, by \eqref{eq:eqOU},
\[
\partial_t[U_{\frac{t}{2}}F(\varphi)]=\frac{1}{2}\Delta U_{\frac{t}{2}}F(\varphi)-\frac{1}{2}\left(A_{\epsilon}\varphi,\nabla U_{\frac{t}{2}}F(\varphi)\right),
\]
so the result follows as $U_0F(\varphi)=F(\varphi)=\EE_{C_0}\left[F(Q_0\varphi+\zeta)\right]$.
\end{proof}
We can now show how the multiscale Bakry-\'Emery criterion can be used to control $\nabla u_t$. By \eqref{eq:Vt} and Lemma \ref{lem:OU}, 
\[
V_t(Q_{t}\varphi)=-\log \EE_{C_{t}}\left[e^{-V_0}(Q_{t}\varphi+\zeta)\right]=-\log U_{\frac{t}{2}}e^{-V_0}(\varphi)=-u_{\frac{t}{2}}(\varphi),
\]
so, for any $\varphi\in \R^{\Lambda_{\epsilon,L}}$ and $t\ge 0$,
\[
\nabla u_{\frac{t}{2}}(\varphi)=-Q_t\nabla V_t(Q_t\varphi)\quad\text{and} \quad \nabla^2 u_{\frac{t}{2}}(\varphi)=-Q_t\nabla^2 V_t(Q_t\varphi)Q_t.
\]
In particular, if $\nu^{\epsilon,L}$ satisfies the multiscale Bakry-\'Emery criterion, then,  for any $\varphi\in \R^{\Lambda_{\epsilon,L}}$ and $t\ge 0$,
\begin{align}
\label{eq:BE}
-\nabla^2 u_t(\varphi)	\succeq \dot{\lambda}_{2t}\Id.
\end{align}

We now turn to the construction of the transport maps based on the vector field $\nabla u_t$. Define the family of maps $S_t:\R^{\Lambda_{\epsilon,L}}\to \R^{\Lambda_{\epsilon,L}}$ via the equation
\begin{align}
\label{eq:transport}
\partial_t S_t(\varphi)=-\nabla u_t\left(S_t(\varphi)\right),\quad S_0=\Id.
\end{align}
Arguing as in \cite[section 2]{MS2}, and using \eqref{eq:BE}, we get that $T_t:=S_t^{-1}$ is a diffeomorphism pushing forward $p_t$ to $p_0$. Moreover, $p_t\to \gamma^{\epsilon,L}$ weakly and $T^{\epsilon,L}:=\lim_{t\uparrow \infty}T_t$ pushes forward $\gamma^{\epsilon,L}=p_{\infty}$ to $\nu^{\epsilon,L}=p_0$. In addition, if $T_t$ is $L_t$-Lipschitz, then $T^{\epsilon,L}$ is $\limsup_{t\uparrow\infty}L_t$-Lipschitz, provided that the limit is finite.

\begin{theorem}
\label{thm:LN}
Set $\lambda_t:=\int_0^t\dot{\lambda}_sds$ for $t\in [0,\infty]$. The transport map $T_t$, which pushes forward $p_t$ to $\nu^{\epsilon,L}$, is $e^{\frac{-\lambda_{2t}}{2}}$-Lipschitz. In particular, the transport map $T^{\epsilon,L}$, which pushes forward $\gamma^{\epsilon,L}$ to $\nu^{\epsilon,L}$, is $e^{-\frac{\lambda_{\infty}}{2}}$-Lipschitz.
\end{theorem}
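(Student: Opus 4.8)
The plan is to track the Lipschitz constant of $T_t = S_t^{-1}$ by differentiating the defining ODE \eqref{eq:transport} in the space variable. Concretely, set $M_t(\varphi) := \nabla S_t(\varphi)$ (the Jacobian matrix). Differentiating \eqref{eq:transport} with respect to $\varphi$ gives the matrix ODE $\partial_t M_t(\varphi) = -\nabla^2 u_t(S_t(\varphi))\, M_t(\varphi)$ with $M_0 = \Id$. To control $T_t = S_t^{-1}$ it is cleaner to work with $N_t(\varphi) := \nabla T_t(\varphi) = (\nabla S_t)^{-1}(T_t(\varphi)) = (M_t(T_t(\varphi)))^{-1}$; differentiating the identity $M_t N_t = \Id$ along the flow yields $\partial_t N_t = N_t\, \nabla^2 u_t$, where the Hessian is evaluated at the appropriate point. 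Since $T_t$ being $L_t$-Lipschitz is equivalent to $\|N_t\|_{\mathrm{op}} \le L_t$ pointwise, it suffices to bound the operator norm of $N_t$.

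The key input is the lower Hessian bound \eqref{eq:BE}, namely $-\nabla^2 u_t(\varphi) \succeq \dot\lambda_{2t}\Id$ for all $\varphi$, which says $\nabla^2 u_t \preceq -\dot\lambda_{2t}\Id$. Then for a unit vector $v$, writing $g(t) := |N_t^{\top} v|^2 = (v, N_t N_t^{\top} v)$, we compute $\frac{d}{dt} g(t) = 2(v, (\partial_t N_t) N_t^{\top} v) = 2(v, N_t \nabla^2 u_t N_t^{\top} v) \le -2\dot\lambda_{2t}\,(v, N_t N_t^{\top} v) = -2\dot\lambda_{2t}\, g(t)$. Grönwall's inequality then gives $g(t) \le g(0)\,\exp\!\left(-2\int_0^t \dot\lambda_{2s}\,ds\right)$. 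A change of variables $s \mapsto s/2$ in the integral (matching the factor-of-two shift between $u$ and $V$ already used to get \eqref{eq:BE}) turns $2\int_0^t \dot\lambda_{2s}\,ds$ into $\int_0^{2t}\dot\lambda_r\,dr = \lambda_{2t}$, and since $g(0) = |v|^2 = 1$ we obtain $\|N_t\|_{\mathrm{op}}^2 \le e^{\lambda_{2t}}$, i.e. $T_t$ is $e^{\lambda_{2t}/2}$-Lipschitz. Letting $t\uparrow\infty$ and invoking the statement already recorded in the excerpt (that $T^{\epsilon,L} = \lim_{t\uparrow\infty} T_t$ is $\limsup_{t\uparrow\infty} L_t$-Lipschitz, via \cite[Lemma 2]{MS2}), the limiting constant is $e^{\lambda_\infty/2}$, which is exactly the claimed bound; note this is finite precisely because the multiscale Bakry-\'Emery criterion is assumed to yield an integrable $\dot\lambda_t$ with $\lambda_\infty < \infty$.

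The main obstacle is justifying the formal differentiations rigorously: one must verify that $S_t$ is a well-defined diffeomorphism for every $t$ (so that $N_t$ exists and is differentiable in $t$), that the Jacobian ODE holds, and that the bound $g(t) \le e^{\lambda_{2t}}$ passes to the limit to control $T^{\epsilon,L}$ rather than merely each $T_t$. The smoothness assumption on $\nu^{\epsilon,L}$ and the regularizing effect of the Ornstein-Uhlenbeck semigroup on $u_t = \log U_t e^{-V_0}$ give the needed regularity of the vector field $\nabla u_t$, and the one-sided Hessian bound \eqref{eq:BE} prevents the flow from blowing up; the precise bookkeeping of the well-posedness and the limit $t\uparrow\infty$ is handled by \cite[Lemma 2]{MS2}, which we are free to cite. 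Everything else is the Grönwall computation sketched above.
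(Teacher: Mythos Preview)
Your approach is essentially the paper's: differentiate the flow ODE \eqref{eq:transport} in the space variable, feed in the Hessian bound \eqref{eq:BE}, and apply Gr\"onwall. The only cosmetic difference is that you track $N_t=\nabla T_t$ (along the trajectory $S_t(\varphi)$, which is what your phrase ``along the flow'' must mean for the identity $\partial_t N_t=N_t\nabla^2 u_t$ to hold), whereas the paper tracks $\alpha_w(s)=\nabla S_s(\varphi)w$, obtains a lower bound on its norm, and inverts at the end. These are equivalent computations.

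There is, however, a sign slip in your last step. From $g'(t)\le -2\dot\lambda_{2t}\,g(t)$ and $g(0)=1$, Gr\"onwall gives
\[
g(t)\le \exp\Bigl(-2\int_0^t\dot\lambda_{2s}\,ds\Bigr)=e^{-\lambda_{2t}},
\]
so your argument actually yields $\|\nabla T_t\|_{\mathrm{op}}\le e^{-\lambda_{2t}/2}$, not $e^{\lambda_{2t}/2}$. The paper's proof contains the analogous slip at the inversion step (from $\nabla S_t(\nabla S_t)^{\mathsf T}\succeq e^{\lambda_{2t}}\Id$ one gets $\nabla T_t(\nabla T_t)^{\mathsf T}\preceq e^{-\lambda_{2t}}\Id$, not $e^{\lambda_{2t}}\Id$), so you are in fact matching the paper's argument and its stated constant. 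This does not affect the applications, since in the sine-Gordon example only the bound $|\lambda_t|\le\lambda^*$ is used and $e^{\lambda^*/2}$ dominates either exponential; but you should be aware that the exponent produced by the Gr\"onwall step carries the opposite sign.
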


\begin{proof}
By \eqref{eq:transport}, for any $\varphi,w\in \R^{\Lambda_{\epsilon,L}}$, we have
\begin{align}
\label{eq:ODEderivative}
\partial_t[\nabla S_t(\varphi)w]=-\nabla^2 u_t(S_t(\varphi))[\nabla S_t(\varphi)w].
\end{align}
Our goal is to show that
\begin{align}
\label{eq:Lipschitz}
|T_t(\psi)-T_t(\varphi)|\le e^{\frac{-\lambda_{2t}}{2}}|\psi-\varphi|\quad\text{for all } \psi,\varphi\in \R^{\Lambda_{\epsilon,L}},
\end{align}
and our proof will follow the argument in \cite{MS2}. In order to establish \eqref{eq:Lipschitz}, it suffices to show that, for any unit $w\in \R^{\Lambda_{\epsilon,L}}$,
\begin{align}
\label{eq:LipschitzTemp}
|\nabla S_t(\varphi)w|\ge e^{\frac{\lambda_{2t}}{2}}.
\end{align}
 Indeed, if \eqref{eq:LipschitzTemp} holds, then $\nabla S_t(\varphi)(\nabla S_t(\varphi))^{\T}\ge \exp (\lambda_{2t})\Id$ so the inverse function theorem gives $\nabla T_t(\varphi)(\nabla T_t(\varphi))^{\T}\le \exp (-\lambda_{2t})\Id$. It follows that $|\nabla T_t(\varphi)|_{\text{op}}\le e^{-\frac{\lambda_{2t}}{2}}$, which is equivalent to \eqref{eq:Lipschitz}. 

In order to verify \eqref{eq:LipschitzTemp}, we will make use of \eqref{eq:ODEderivative} and \eqref{eq:BE}. Fix $w\in  \R^{\Lambda_{\epsilon,L}}$ and $\varphi\in \R^{\Lambda_{\epsilon,L}}$, with $|w|=1$, and define $\alpha_w(s):=\nabla S_s(\varphi)w$. Then, by \eqref{eq:ODEderivative} and \eqref{eq:BE},
\begin{align*}
\partial_s |\alpha_w(s)|&=\frac{1}{|\alpha_w(s)|}\alpha_w(s)^{\T}\partial_s\alpha_w(s)=\frac{1}{|\alpha_w(s)|}w^{\T}\nabla S_s(\varphi)^{\T}(-\nabla^2 u_s(S_s(\varphi)))\nabla S_s(\varphi)w\\
&\ge \dot{\lambda}_{2s}w^{\T}\nabla S_s(\varphi)^{\T}\nabla S_s(\varphi)w=\dot{\lambda}_{2s}|\nabla S_s(\varphi)w|=\dot{\lambda}_{2s}|\alpha_w(s)|.
\end{align*}
Since $|\alpha_w(0)|=1$, we can deduce from Gr\"onwall's inequality that
\[
\nabla S_t(\varphi)w=|\alpha_w(t)|\ge \exp\left(\int_0^t\dot{\lambda}_{2s}ds\right)=e^{\frac{\lambda_{2t}}{2}},
\]
which is \eqref{eq:LipschitzTemp}.
\end{proof}

\section{The Brownian transport map}
\label{sec:BM}
In this section we briefly sketch another connection between Polchinski's exact renormalization group and transportation of measures. Unlike the rest of the paper where we took $\tau=\infty$ and $\dot{C}_t=e^{-tA_{\epsilon}}$,  in this section we allow for arbitrary terminal time $\tau$ and positive semidefinite matrices $(\dot{C}_t)$. This freedom corresponds to the freedom in choice of cut-off functionals in the renormalization flow, or alternatively, to a change of metric. 

The way  the different versions of the multiscale Bakry-\'Emery criteria are used in \cite{bauerschmidt2020spectral, BB, bauerschmidt2022log, bauerschmidt2022logPhi} is the via the Polchinski semigroup, which we now present. Fix $\tau\in [0,\infty]$ and let $[0,\tau]\ni t \mapsto\dot{C}_t$ be a bounded function taking values in the set of positive semidefinite matrices such that, with $C_t:=\int_0^t\dot{C}_sds$, we have $C_{\tau}=A_{\epsilon}^{-1}$. Consider the stochastic differential equation \cite[equation (2.12)]{BB},
\begin{align}
\label{eq:FP}
d\tilde\Phi_t=-\dot{C}_{\tau-t}\nabla V_{\tau-t}(\tilde\Phi_t)dt+\dot{C}_{\tau-t}^{1/2}dB_t,\quad t\in [0,\tau],
\end{align}
where $(B_t)$ is a standard Brownian motion in $\R^{\Lambda_{\epsilon,L}}$ and $(V_t)$ is defined by \eqref{eq:Vt}. Denoting $\Phi_t:=\tilde\Phi_{\tau-t}$, we set \cite[equation (2.13)]{BB}, for $s\le t$,
\[
\PP_{s,t}F(\varphi):=\mathbb E[F(\Phi_s)|\Phi_t=\varphi], 
\] 
and this time-inhomogeneous semigroup $(\PP_{s,t})$ is the \textbf{Polchinski semigroup}. The process $(\Phi_t)$ is an instance of the Schr\"odinger bridge \cite{leonard2013survey, chen2021stochastic}, or the F\"ollmer process \cite{follmer1985entropy, follmer1986time,dai1991stochastic}, and  is obtained by taking the martingale $M_t:=\int_0^t\dot{C}_r^{1/2}dB_r$, which satisfies $M_{\tau}\sim \gamma^{\epsilon,L}$, and conditioning it so that $M_{\tau}\sim \nu^{\epsilon,L}$. The resulting process $(\Phi_t)$ of this conditioning, which interpolates $\Phi_0\sim\delta_0$ to $\Phi_{\tau}\sim\nu^{\epsilon,L}$, is an instance of Doob's $h$-transform \cite{jamison1975markov,cattiaux2014semi}. This process is also used in the \textbf{stochastic control} approach to Euclidean quantum field theories developed by Barashkov and Gubinelli \cite{barashkov2022stochastic, barashkov2020variational,barashkov2021variational}.

In the context of functional inequalities, the process $(\Phi_t)$ was developed independently by Eldan \cite{eldan2013thin}, from a different perspective called  \textbf{stochastic localization}, and by Lehec \cite{lehec2013representation} who used the special case $\tau=1$ and $\dot{C}_t=\Id$ . See also \cite[section 2.4.2]{chen2022localization} and \cite[Lemma 4.1]{MS1} (treating the case $\tau=1$ and $\dot{C}_t=\Id$). The work of Mikulincer and the author \cite{MS1} gave a new interpretation of the process $(\Phi_t)$ (in the case $\tau=1$ and $\dot{C}_t=\Id$) as the \textbf{Brownian transport map} which maps the Wiener measure on Wiener space into $\nu^{\epsilon,L}$. Further, by implicitly using one of the versions of multiscale Bakry-\'Emery criteria, the work \cite{MS1} deduces Lipschitz properties of the Brownian transport map. In essence, the measure $\mu^{\epsilon,L}$ from section \ref{subsec:transportfunct} is taken to be the Wiener measure (rather than $\gamma^{\epsilon,L}$) and the map $T^{\epsilon,L}$ is taken to be the Brownian transport map (rather than the Langevin/Ornstein-Uhlenbeck transport map). The idea behind this approach is that we can use infinite-dimensional Gaussian measures, e.g. the Wiener measure, since Gaussian measures satisfy functional inequalities with constants that are independent of the dimension; in our case this amounts to being independent of the infrared and ultraviolet cutoffs. We leave for future work the extension of the Brownian transport maps for general $(\dot{C}_t)$ and the incorporation of the various multiscale Bakry-\'Emery criteria.

\bibliographystyle{amsplain0}
\bibliography{refPolchinski}

\end{document}